\newtheorem{corollary}{Corollary}
\newtheorem{theorem}{Theorem}
\newtheorem{proposition}{Proposition}
\journal{}
\begin{document}

\begin{frontmatter}


\title{A Novel Non-Parametric Approach to Compare Paired General Statistical Distributions between Two Interventions\tnoteref{label1}}
\tnotetext[label1]{This document is a collaborative effort.}
\author[lk]{Kang Li}
\ead{kangli@pku.edu.cn}
\author[fk]{Kai Fan}
\ead{kaifan@pku.edu.cn}

\address[lk]{Department of Probability and Statistics
School of Mathematical Sciences
Peking University}
\address[fk]{Key Laboratory of Machine Perception, MOE,
School of Electronics Engineering and Computer Science,
Peking University}



\begin{abstract}

Despite of many measures applied for determine the difference between two groups of observations, such as mean value, median value, sample standard deviation and so on, we propose a novel non parametric  transformation method based on Mallows distance to investigate the location and variance differences between the two groups. The convexity theory of this method is constructed and thus it is a viable alternative for data of
any distributions. In addition, we are able to establish the similar method under other distance measures, such as Kolmogorov-Smirnov distance. The application of our method in real data is performed as well.

\end{abstract}

\begin{keyword}
Mallows Distance; Shift and Scaled; Kolmogorov-Smirnov Distance



\end{keyword}

\end{frontmatter}


\section{INTRODUCTION}
\label{sec:intro}

The aim of this paper is to provide a method to determine the location and scale relationship between two groups of one-dimensional observations
for two samples, say $\{X_1, X_2, ..., X_n\}$ and $\{Y_1, Y_2, ..., Y_m\}$, such as the responses of two different products on different
subjects, the scores of people on two examinations and so on. Suppose $\{X_i\}$ are independent identity distributed according to $F(\cdot)$ and
$\{Y_i\}$ are independent identity distributed according to $G(\cdot)$, where $F(\cdot)$ and $G(\cdot)$ are two unknown distribution functions.
When the test for normality is not passed, nonparametric analysis methods should be applied. Usually, the mean difference or midian difference
of the two samples is used to determine the location difference and use mean ratio or midian ratio to obtain the scale. They are not reliable
since only a small information of the two samples are extracted and the results are not meaningful.

Based on the idea of location-scale transformation, Freitag, Munk and Vogt \citep{FMV03} has developed an approach to access the structure relationship
between distributions, in which the whole information of the samples are used. However, the problem we concern are the location difference and
scale between two distributions rather than the model structure. That's to say, we have to determine the values of  location difference or scale
or both for any two distributions. Using the same transformation idea, our approach can be described as follows.

Let $\phi(\cdot)$  a linear function, i.e., $\phi(x)=\sigma x+h$, where $\sigma>0$ and $h\in (-\infty, +\infty)$. $d$ is a given measure of
discrepancy between two distributions. Denote the distribution function of $\phi(X_i)$ by $F_1$. Let $D(\sigma, h)=d(F_1, G)$. We want to find
the value $(\sigma_0, h_0)$ which minimize $D(\sigma_0, h_0)$. That's to say, if we transform $\{X_i\}$ to be $\sigma_0X_i+h_0$ , the two groups
of observations are closest and under the "closest" mean we can not tell there are any location difference or scale between them. Therefore, we
can say $\{Y_i\}$ is at least $h_0$ larger than $\sigma_0$ times of $\{X_i\}$.

There are  situations where $(\sigma_0, h_0)$ is not unique. Let  $S=\mbox{argmin}_{(\sigma, h)}D(\sigma, h)$, $S_\sigma=\{\sigma: \exists h, ~
(\sigma,h)\in S\}$ and $S_{h|\sigma}=\{h: (\sigma,h)\in S\}$. Conservatively, at first we can take $\sigma_0=1$ if $1\in S_\sigma$, else
$\sigma_0=\inf S_\sigma$; then take $h_0$ to be the value in $S_{h|\sigma_0}$ satisfying $|h_0|=\inf \{|h|: h\in S_{h|\sigma_0}\}$. When $S$ is
a continues region, it is easy to see that the selected $(\sigma_0, h_0)$ is unique. Therefore, we should find certain $d$ that $S$ is  a
continues region.

Besides, if we let $\sigma\equiv 1$ in $D(\sigma, h)$, the location difference between $F$ and $G$ could be determined. If we let $h\equiv 0$ in
$D(\sigma, h)$, the scale between $F$ and $G$ could be determined. In practice, we could use the empirical distributions of the two group of
data for $F(\cdot)$ and $G(\cdot)$, respectively. The discrepancy measure $d$ we consider in this paper will be focused on Mallows Distance and expand to
Kolmogorov-Smirnov Distance. Mallows Distance was presented in the formulation of statistics framework in 1972, however, an independent physics research work had involved such a related concept a little earlier in 1940s.

The rest of the paper is unfolded as follows. In Section 2 the main results under Mallows Distance for the location transformation, scale
transformation or both are presented, showing that we can uniquely determine the location  and scale relationship between two distributions and
thus Mallows Distance is suitable discrepancy measure to use. In Section 3 the similar results can be obtained under Kolmogorov-Smirnov Distance
but only for location transformation. Section 4 gives the application of this approach to determine the location and scale relationship on real
data.

\section{MATHEMATICAL FORMULA}\label{sec:mth}

\subsection{Definition of Mallows Distance}

In this subsection, we consider the proposed approach under Mallows Distance. Formally, The Mallows $r$-distance (also known as  Wasserstein
$r$-distance) between distributions $F(\cdot)$ and $G(\cdot)$ regarding to random variables $X$ and $Y$, respectively, is defined as
\begin{equation}
d_r(F,G)=\inf_{X,Y}(E|X-Y|^r)^{\frac{1}{r}},
\end{equation}
where the infimum is taken over the set (denoted by $\mathbb{D}_r$) of all joint distributions of $X$ and $Y$ with marginals $F(\cdot)$ and
$G(\cdot)$. Here we require that $X$ and $Y$ have finite $r$th moment, i.e., $E|X|^r<\infty$ and $E|Y|^r<\infty$.


For $r \geq 1$, The Mallows $r$-distance $d_r(F,G)$ has the two properties.
\begin{itemize}
\item Mallows distance $d_r(F,G)$, i.e. satisfies axioms of a metric on $\mathbb{D}_r$.
\item The convergence of distributions in Mallows distance is equivalent to weak convergence plus $r$th moment convergence.(Lavina and Bickel,
 2001)
\end{itemize}

Let $U$ be an uniform random variable, $U \sim Unif(0,1)$, and $F^{-1}(\cdot)$ is the inverse of a distribution function, $F^{-1}(u) = \inf_x
\{x:F(x) \geq u\}$. According to Johnson and Samworth (2005) we know
\begin{equation}\label{eq-u}
\inf_{X,Y}E|X-Y|^r=E|F^{-1}(U)-G^{-1}(U)|^r.
\end{equation}
Equation (\ref{eq-u}) gives an easier computation formula to calculate the distance, that is
\begin{equation}\label{eq-uu}
d_r(F,G)=\left(\int_0^1 |F^{-1}(u)-G^{-1}(u)|^rdu\right)^{\frac{1}{r}} .
\end{equation}
Particularly, when $r=1$, we have a further relationship for computation
\begin{equation}\label{eq-uuu}
d_1(F,G)=\int_0^1 |F^{-1}(u)-G^{-1}(u)|du=\int_{-\infty}^{+\infty}|F(x)-G(x)|dx ,
\end{equation}
which is especially useful when calculating  Mallows 1-distance using empirical distribution for real data, in order to circumvent the unknown
real distribution.

\subsection{Approach under Mallows Distance}

Let $\phi(X)=\sigma X+h$, where $\sigma>0$ and $h \in (-\infty, \infty)$, and $F_1(\cdot)$ be its distribution function, then it is easy to
obtain that $F_1(x)=F(\frac{x-h}{\sigma})$. The purpose of our approach is to find the optimal shift and scale values $(\sigma_0, h_0)$ to
minimizing the Mallows $r$-distance between $F_1(x)$ and $G(x)$, that is
\begin{equation}\label{eq-1}
\arg\min_{\sigma, h} d_r(F_1(x),G(x)) .
\end{equation}

Then the following result can be obtained.
\begin{theorem}
\label{thr-1}
For distribution functions $F(\cdot)$ and $G(\cdot)$, let $F_1(x)=F(\frac{x-h}{\sigma})$ with $\sigma > 0$. Then the Mallows $r$-distance
($r\ge1$) between $F_1(x)$ and $G(x)$, denoted by $D(\sigma, h)$, a function of two variables $\sigma$ and $h$, is a continuous and convex
function on half plane, i.e., for any $0<t<1$, and $\sigma_1 \neq \sigma_2$, $h_1 \neq h_2$, it holds that
\[D(t\sigma_1+(1-t)\sigma_2,th_1+(1-t)h_2) \leq tD(\sigma_1,h_1) + (1-t)D(\sigma_2,h_2) .\]
\end{theorem}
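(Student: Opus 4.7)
The plan is to reduce the statement to the triangle inequality of the $L^r$--norm, using the inverse--function formula for the Mallows distance already recorded in equation~(\ref{eq-uu}).

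First, I would compute $F_1^{-1}$ in closed form. Since $F_1(x)=F((x-h)/\sigma)$ with $\sigma>0$, the map $x\mapsto (x-h)/\sigma$ is a strictly increasing affine bijection, so from the definition of generalized inverse one immediately gets
\[
F_1^{-1}(u)=\sigma F^{-1}(u)+h,\qquad u\in(0,1).
\]
Plugging this into equation~(\ref{eq-uu}) yields the key representation
\[
D(\sigma,h)=\left(\int_0^1 \bigl|\sigma F^{-1}(u)+h-G^{-1}(u)\bigr|^r\,du\right)^{1/r}
=\bigl\|\sigma F^{-1}+h-G^{-1}\bigr\|_{L^r(0,1)}.
\]
Finiteness is not an issue since $F$ and $G$ have finite $r$th moments, i.e., $F^{-1},G^{-1}\in L^r(0,1)$, and a constant plus a scalar multiple remains in $L^r(0,1)$.

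Next I would read off convexity from Minkowski's inequality. Set $\psi_{\sigma,h}(u):=\sigma F^{-1}(u)+h-G^{-1}(u)$; for each fixed $u$ this is \emph{affine} in $(\sigma,h)$, so with $\sigma_t:=t\sigma_1+(1-t)\sigma_2$ and $h_t:=th_1+(1-t)h_2$ we have the pointwise identity $\psi_{\sigma_t,h_t}=t\psi_{\sigma_1,h_1}+(1-t)\psi_{\sigma_2,h_2}$. Applying the $L^r$--triangle inequality (valid for $r\ge 1$) gives
\[
D(\sigma_t,h_t)=\bigl\|t\psi_{\sigma_1,h_1}+(1-t)\psi_{\sigma_2,h_2}\bigr\|_{L^r}
\le t\,D(\sigma_1,h_1)+(1-t)\,D(\sigma_2,h_2),
\]
which is the desired inequality. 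Notice the argument makes no use of the hypotheses $\sigma_1\ne\sigma_2$ or $h_1\ne h_2$; those only become relevant if one wants strict convexity.

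Finally, continuity on the open half--plane $\{\sigma>0\}$ follows at once: any convex function on an open convex subset of $\mathbb{R}^2$ is automatically continuous on the interior of its domain. If a more self--contained argument is preferred, continuity can be established directly by dominated convergence applied to the integrand $|\sigma F^{-1}(u)+h-G^{-1}(u)|^r$, using the dominating function $C(|F^{-1}(u)|^r+|G^{-1}(u)|^r+1)$ valid uniformly on a neighborhood of any $(\sigma_0,h_0)$ with $\sigma_0>0$. I do not anticipate any real obstacle: the whole argument collapses to recognizing $D$ as a norm of a function that depends affinely on the parameters, and then invoking Minkowski's inequality.
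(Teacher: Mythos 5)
Your proposal is correct and follows essentially the same route as the paper: derive $F_1^{-1}(u)=\sigma F^{-1}(u)+h$, substitute into the quantile representation of the Mallows distance, observe that the integrand is affine in $(\sigma,h)$, and conclude by Minkowski's inequality. The only difference is that you supply an explicit justification of continuity (via convexity on an open set, or dominated convergence), which the paper dismisses as trivial.
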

\begin{proof}
It can be easily obtained that $F_1^{-1}(u) = \sigma F^{-1}(u) + h$. From (\ref{eq-uu}), we know
\begin{equation}
D(\sigma,h)=\left(\int_0^1 \left|\sigma F^{-1}(u) + h -G^{-1}(u)\right|^rdu\right)^\frac{1}{r}.
\end{equation}
Then the continuity of the $D(\sigma,h)$ is trivial. Besides,  using Minkowski unequality we have

\begin{align*}
& D(t\sigma_1+(1-t)\sigma_2,th_1+(1-t)h_2)\\
& = \left(\int_0^1 \left|(t\sigma_1+(1-t)\sigma_2)F^{-1}(u)+th_1+(1-t)h_2-G^{-1}(u)\right|^rdu\right)^{\frac{1}{r}}\\
& = \left(\int_0^1 \left|t\left(
\sigma_1F^{-1}(u)+h_1-G^{-1}(u)\right)+(1-t)\left(\sigma_2F^{-1}(u)+h_2-G^{-1}(u)\right)\right|^rdu\right)^{\frac{1}{r}}\\
& \leq t\left(\int_0^1 \left|\left(\sigma_1F^{-1}(u)+h_1-G^{-1}(u)\right)\right|^rdu \right)^{\frac{1}{r}} + (1-t)\left(\int_0^1 \left|\left
(\sigma_2F^{-1}(u)+h_2-G^{-1}(u)\right)\right|^rdu \right)^{\frac{1}{r}} \\
& = tD(\sigma_1,h_1) + (1-t)D(\sigma_2,h_2) .
\end{align*}
\end{proof}

According to the definition of Theorem \ref{thr-1}, scaled parameter $\sigma$ should be greater than zero, but we can easily give an apparent analysis of transformed distribution function if $\sigma \rightarrow 0^+$.

\begin{proposition}
\begin{equation*}
\lim_{\sigma \rightarrow 0^+} F\left(\frac{x+h}{\sigma}\right) = \left\{
\begin{aligned}
   1 &, x > 0  \\
   F(0) &, x=0 \\
   0 &, x<0
   \end{aligned}
\right.
\end{equation*}
\end{proposition}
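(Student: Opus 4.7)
The plan is to verify the limit pointwise in $x$ by examining the behavior of the argument $(x+h)/\sigma$ as $\sigma \downarrow 0$ and invoking only the standard boundary properties of a cumulative distribution function, namely $\lim_{t\to+\infty} F(t)=1$, $\lim_{t\to-\infty} F(t)=0$, and monotonicity. No smoothness assumption on $F$ is required, and no appeal to Theorem~\ref{thr-1} is needed.

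First I would split the analysis into three cases based on the sign of the numerator. If $x+h>0$, then $(x+h)/\sigma\to+\infty$ as $\sigma\downarrow 0$, and the boundary limit $F(t)\to 1$ gives $F((x+h)/\sigma)\to 1$. If $x+h<0$, then symmetrically $(x+h)/\sigma\to-\infty$, so $F((x+h)/\sigma)\to 0$. The borderline case $x+h=0$ is the cleanest of the three: for every $\sigma>0$ the argument is identically $0$, so $F((x+h)/\sigma)=F(0)$ for all $\sigma$, and the limit is $F(0)$ trivially, with no need for right-continuity.

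One point worth flagging before writing the formal proof is that the proposition as displayed partitions the three cases by the sign of $x$ alone, whereas the argument inside $F$ involves $x+h$ (and the preceding subsection actually uses $(x-h)/\sigma$ in the definition of $F_1$). I read this as a minor typographical inconsistency: either $h=0$ is implicitly intended, or the thresholds should be $x>-h$, $x=-h$, $x<-h$. Under any of these readings the three-case argument above applies without change, and I would simply state the version I am proving with the corrected thresholds.

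The main obstacle, such as it is, lies only in the $x+h=0$ case, because it is the one case whose value depends on $F$ at a specific finite point rather than on asymptotic behavior at $\pm\infty$. The correct way to handle it is to observe that this is not really a limit at all but an exact equality, which is why the statement can list $F(0)$ without any hypothesis on the continuity of $F$ at $0$. Everything else follows from two one-line applications of the CDF boundary limits.
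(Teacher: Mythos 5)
The paper offers no proof of this proposition at all --- it is introduced with the remark that one ``can easily give an apparent analysis'' and then simply stated --- so there is nothing to compare against; your argument is the standard one the authors evidently had in mind, and it is correct. The three-case split on the sign of the numerator, the appeal only to $F(t)\to 1$ as $t\to+\infty$ and $F(t)\to 0$ as $t\to-\infty$, and the observation that the borderline case is an exact equality $F(0)$ for every $\sigma>0$ (hence needs no continuity hypothesis) together constitute a complete proof. You are also right to flag that the displayed cases are indexed by the sign of $x$ while the argument of $F$ is $(x+h)/\sigma$ (and that the surrounding text uses $(x-h)/\sigma$); as stated the proposition is only correct for $h=0$, and restating it with thresholds at $-h$ (or $+h$, matching $F_1$) is the right fix.
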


Also, the Theorem shows that under Mallows $r$-distance ($r\ge 1$) $D(\sigma,h)$ is a convex function of $(\sigma, h)$, thus (\ref{eq-1}) is a
continues region. We can select $(\sigma_0, h_0)$ according to the plan in section \ref{sec:intro}. If we only consider the shifted case or
scaled case, let $\sigma\equiv 1$ or $h\equiv 0$ in $D(\sigma,h)$, then we can obtain the following results.

\begin{corollary}
For distribution functions $F(\cdot)$ and $G(\cdot)$, let $F_1(x)=F(x-h)$. Then the Mallows $r$-distance ($r\ge1$) between $F_1(x)$ and $G(x)$,
denoted by $D(h)$, a function of $h$, is a continuous and convex function on $(-\infty,\infty)$, i.e. for any $0<t<1$, and $h_1 \neq h_2$, it
holds that
\[D(th_1+(1-t)h_2) \leq tD(h_1) + (1-t)D(h_2) .\]
\end{corollary}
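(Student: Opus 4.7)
The plan is to recognize the corollary as the restriction of Theorem~\ref{thr-1} to the line $\sigma\equiv 1$ in the $(\sigma,h)$ half-plane, so essentially no new work is needed. First I would write out the specialized distance
\[
D(h) = \left(\int_0^1 |F^{-1}(u)+h-G^{-1}(u)|^r\,du\right)^{\frac{1}{r}},
\]
which is obtained from the formula derived in the proof of Theorem~\ref{thr-1} by setting $\sigma=1$. Continuity in $h$ then follows immediately from continuity of $D(\sigma,h)$ as a two-variable function, restricted to the horizontal line $\sigma=1$.

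For the convexity step, I would invoke Theorem~\ref{thr-1} with $\sigma_1=\sigma_2=1$: since $t\cdot 1+(1-t)\cdot 1 = 1$, the two-variable inequality
\[
D(t\sigma_1+(1-t)\sigma_2,\,th_1+(1-t)h_2)\leq tD(\sigma_1,h_1)+(1-t)D(\sigma_2,h_2)
\]
specializes directly to
\[
D(th_1+(1-t)h_2)\leq tD(h_1)+(1-t)D(h_2),
\]
which is exactly the claim. Alternatively, if the reader prefers a self-contained argument, the same inequality is obtained in one line by applying the Minkowski inequality to the integrand $t(F^{-1}(u)+h_1-G^{-1}(u))+(1-t)(F^{-1}(u)+h_2-G^{-1}(u))$, which is a verbatim copy of the display in the proof of Theorem~\ref{thr-1} with $\sigma_1=\sigma_2=1$.

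There is no real obstacle here; the only thing to flag is the hypothesis $h_1\neq h_2$ in the statement, which is harmless because the inequality is trivially an equality when $h_1=h_2$. I would therefore present the proof as a two-sentence reduction to Theorem~\ref{thr-1}, noting the specialization $\sigma=1$ for both continuity and convexity, rather than reproducing the Minkowski computation.
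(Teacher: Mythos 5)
Your proposal matches the paper's treatment exactly: the corollary is stated there without a separate proof, as the immediate specialization $\sigma\equiv 1$ of Theorem~\ref{thr-1}, which is precisely your reduction. The only caveat worth noting is that Theorem~\ref{thr-1} as literally stated assumes $\sigma_1\neq\sigma_2$, so the cleanest justification is your fallback of rerunning the one-line Minkowski computation with $\sigma_1=\sigma_2=1$, which is what the paper's argument delivers anyway.
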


\begin{corollary}
For distribution functions $F(\cdot)$ and $G(\cdot)$, let $F_1(x)=F(\frac{x}{\sigma})$ with $\sigma > 0$. The Mallows Distance ($r\ge1$) between
the scaled distribution $F_1(x)$ and $G(x)$, denoted as $D(\sigma)$, a function of $\sigma$, is a continuous and convex function on
$(0,\infty)$, i.e. for any $0<t<1$, and $\sigma_1 \neq \sigma_2$, it holds that
\[D(t\sigma_1+(1-t)\sigma_2) \leq tD(\sigma_1) + (1-t)D(\sigma_2) .\]
\end{corollary}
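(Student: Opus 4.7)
The plan is to observe that this corollary is a direct specialization of Theorem~\ref{thr-1} obtained by freezing the shift parameter at $h\equiv 0$, so the whole argument reduces to a one-line appeal to that theorem. Writing $\widetilde{D}(\sigma,h)$ for the two-parameter function studied in Theorem~\ref{thr-1}, we have $D(\sigma)=\widetilde{D}(\sigma,0)$; applying the convexity inequality of Theorem~\ref{thr-1} with $h_1=h_2=0$, so that $th_1+(1-t)h_2=0$, immediately yields
\[D(t\sigma_1+(1-t)\sigma_2)=\widetilde{D}(t\sigma_1+(1-t)\sigma_2,\,0)\le t\widetilde{D}(\sigma_1,0)+(1-t)\widetilde{D}(\sigma_2,0)=tD(\sigma_1)+(1-t)D(\sigma_2),\]
while continuity in $\sigma$ is inherited from the joint continuity of $\widetilde{D}$ established in the same theorem.

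For readers who would prefer a self-contained derivation, the first step is to note that $F_1(x)=F(x/\sigma)$ gives $F_1^{-1}(u)=\sigma F^{-1}(u)$. Substituting into the quantile representation~(\ref{eq-uu}) yields
\[D(\sigma)=\left(\int_0^1\bigl|\sigma F^{-1}(u)-G^{-1}(u)\bigr|^r\,du\right)^{1/r},\]
which is the $L^r(0,1)$-norm of an affine-in-$\sigma$ family of integrands. Writing $G^{-1}(u)=tG^{-1}(u)+(1-t)G^{-1}(u)$, one decomposes the integrand at $\sigma=t\sigma_1+(1-t)\sigma_2$ as a convex combination of $\sigma_1 F^{-1}(u)-G^{-1}(u)$ and $\sigma_2 F^{-1}(u)-G^{-1}(u)$, and then applies the Minkowski inequality exactly as in the proof of Theorem~\ref{thr-1}.

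I do not anticipate any real obstacle: the convexity is structural, inherited from the convexity of $\|\cdot\|_{L^r}$ composed with an affine map of $\sigma$. The only care required is to confirm that the finite $r$th-moment assumption of Section~\ref{sec:mth} ensures $D(\sigma)<\infty$ for every $\sigma>0$ and provides, on each compact subinterval of $(0,\infty)$, an $L^r$-dominating function that justifies the exchange of limit and integral needed for continuity via dominated convergence.
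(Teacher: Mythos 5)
Your proposal is correct and matches the paper's own (implicit) argument: the paper obtains this corollary precisely by setting $h\equiv 0$ in $D(\sigma,h)$ and invoking Theorem~\ref{thr-1}, and your self-contained Minkowski derivation is just the theorem's proof with $h_1=h_2=0$. The only pedantic caveat is that the theorem as literally stated assumes $h_1\neq h_2$, so strictly speaking one should cite the Minkowski computation (which never uses that hypothesis) rather than the theorem's statement verbatim — exactly what your second paragraph does.
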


In order to illustrate $D(\sigma,h)$ may not be strictly convex, let distributions $F(\cdot)$ and $G(\cdot)$ to be
\begin{equation*}
  F(x)=\left\{
   \begin{aligned}
   0 ,&\qquad x \leq -1,  \\
   linear ,&\qquad -1<x<-0.5, \\
   0.5 ,&\qquad -0.5 \leq x \leq 0.5,\\
   linear, &\qquad 0.5<x<1,\\
   1 ,& \qquad x \geq 1,
   \end{aligned}
   \right. \qquad\mbox{and}\qquad
    G(x)=\left\{
   \begin{aligned}
   0, &\qquad x \leq 1,  \\
   linear, &\qquad 1<x<2, \\
   1, &\qquad x \geq 2.
   \end{aligned}
   \right.
\end{equation*}
Actually, $F(\cdot)$ is the uniform distribution over two half unit intervals $[-1,-0.5]$ and $[0.5,1]$, and $G(\cdot)$ is uniform over $[1,2]$.
From (\ref{eq-uuu}) we know $D(h)$ can be calculated via $\int_{-\infty}^{+\infty} |F(x-h)-G(x)|dx$. Then it is easy to verify that $D(h)$
reaches minimum of 0.5 in the entire interval $[-2,-1]$. The optimal shifted value for $\arg\min_{h}D(h)$ is not unique. Therefore, $D(h)$ is
not strictly convex on $(-\infty,\infty)$, nor is $D(\sigma,h)$.

\subsection{Generalization on K-S Distances}

Since our approach is successful under Mallows distance, there is nothing preventing us from exploring other discrepancy measure. Here, we are
able to realize our approach for shifted case under Kolmogorov-Smirnov distance (K-S distance),  $D(F,G)=\sup_x|F(x)-G(x)|$.

For K-S distance and distribution functions $F(\cdot)$ and $G(\cdot)$, our purpose is to find the optimal shift value $h_0$ to minimize
$D(h)=\sup_x|F(x-h)-G(x)|$. Let $F(\infty)=1$ and $F(-\infty)=0$ for distribution $F$. Define $D^+(h)=\sup_x[F(x-h)-G(x)]$ and
$D^-(h)=\sup_x[G(x)-F(x-h)]$, then we have $D(h)=\max\{D^+(h),D^-(h)\}$. And we denote $S=\{h|D(h)=D^+(h)\}$ and $h^*=\inf S$. Due to these
definitions, the statement $-\infty \in S$ is apparently hold and we have the following result.

\begin{theorem}\label{thr-2}
$\mathtt{(1)}$ If $h^* \in S$, then the function $D(h)$ decreases on $(-\infty,h^*]$ and increases on $(h^*,+\infty)$. $\mathtt{(2)}$ If $h^*
\notin S$, then the function $D(h)$ decreases on $(-\infty,h^*)$ and increases on $[h^*,+\infty)$.
\end{theorem}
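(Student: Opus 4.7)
The argument has two structural pieces followed by a clean case split on the boundary. The first piece is a pair of one-sided monotonicity facts in $h$: namely that $D^+(h) = \sup_x[F(x-h)-G(x)]$ is non-increasing in $h$, while $D^-(h) = \sup_x[G(x)-F(x-h)]$ is non-decreasing in $h$. Both reduce to the pointwise observation that $h_1 < h_2$ forces $F(x-h_1) \ge F(x-h_2)$ for every $x$ (by monotonicity of the distribution function $F$), and taking the supremum in $x$ transfers the inequality to $D^+$ and $D^-$ in the appropriate direction.

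The second piece is that the set $S = \{h : D(h) = D^+(h)\}$, i.e.\ the set where $D^+(h) \ge D^-(h)$, is a left ray. If $h \in S$ and $h' < h$, then by the monotonicities just established,
\[ D^+(h') \;\ge\; D^+(h) \;\ge\; D^-(h) \;\ge\; D^-(h'), \]
so $D^+(h') \ge D^-(h')$, giving $h' \in S$. Combined with the limits $D^+(h) \to 1$ and $D^-(h) \to 0$ as $h \to -\infty$ (with the opposite limits at $+\infty$), this forces $S$ to be an interval with $h^*$ as its right endpoint, of the form $(-\infty, h^*]$ or $(-\infty, h^*)$, the two cases corresponding exactly to whether $h^*$ itself lies in $S$.

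The theorem now drops out: on $S$ we have $D = D^+$, which is non-increasing, and on its complement $D = D^-$, which is non-decreasing. In case (1), $h^* \in S$ places the boundary point on the left side and yields monotone decrease on the closed interval $(-\infty, h^*]$ together with monotone increase on the open interval $(h^*, +\infty)$; case (2) is symmetric, with the closed endpoint passing to the right. \textbf{The only real work} is the opening monotonicity of $D^+$ and $D^-$, which is essentially immediate from the monotonicity of $F$; after that, the left-ray property of $S$ is a one-line consequence, and the remaining matter is simply matching whichever side of $h^*$ inherits the closed endpoint, as dictated by the hypothesis $h^* \in S$ versus $h^* \notin S$.
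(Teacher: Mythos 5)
Your proof is correct and follows essentially the same route as the paper's: both rest on the opposite monotonicities of $D^+$ (non-increasing) and $D^-$ (non-decreasing) in $h$, coming from the monotonicity of $F$, which make $S$ a left ray on which $D=D^+$ decreases while $D=D^-$ increases on the complementary right ray. Your modular organization is a cleaner packaging of the paper's interleaved chain of inequalities, and you correctly read $h^*$ as the right endpoint $\sup S$ of that ray (the paper's ``$h^*=\inf S$'' must be a slip, since $S$ contains all sufficiently negative $h$).
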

\begin{proof}
$\mathtt{(1)}$ $\forall h_0 \in S$, if $h_2 \le h_1 \le h_0$, then we have
\begin{equation}\label{eq-2}
D(h_1) \geq D^+(h_1) \geq D^+(h_0)=D(h_0).
\end{equation}

If $D(h_1)=D^-(h_1)$, then $D(h_0) \geq D^-(h_0) \geq D^-(h_1) =D(h_1) \geq D(h_0)$ holds and $ D(h_1)=D(h_0)$. By (\ref{eq-2}), we can obtain
$D(h_1)=D^+(h_1)$. Thus $D(h_1)=D^+(h_1)$ and $ D(h_2) \geq D^+(h_2) \geq D^+(h_1)=D(h_1)$. Therefore, $ D(h)$ decreases on $(-\infty,h^*]$.

Similarly, $\forall h_0 \notin S$, if $h_0 \le h_1 \le h_2$, then we have
\begin{equation}\label{eq-3}
D(h_1) \geq D^-(h_1) \geq D^-(h_0)=D(h_0).
\end{equation}
If $D(h_1)=D^+(h_1)$, then $D(h_0) \geq D^+(h_0) \geq D^+(h_1)=D(h_1) \geq D(h_0)$ holds and $ D(h_1)=D^+(h_0)$. By (\ref{eq-3}), we can obtain
$D(h_1)=D^-(h_1)$. Thus $ D(h_2) \geq D^-(h_2) \geq D^-(h_1)=D(h_1)$. Therefore, $ D(h)$ increases on $(h^*, \infty)$.

$\mathtt{(2)}$ The proof can follow the similar method as $\mathtt{(1)}$ trivially.
\end{proof}

 Theorem \ref{thr-2} shows that our approach under K-S distance can also provide a reasonable, possibly unique location difference  between two
 distributions.

\section{EXPERIMENTS AND SIMULATION}

\subsection{A Real Data Set}

In hair study, we need to assess effets of hair care products in changing hair diameters
 after a period of use. There are two treatments, say $C$ and $F$. The experiments are  conducted as follows.
 There are 30 subjects and each subject use $C$ and $F$ on the left and right head, respectively. There are two study visit,
 baseline and 8 weeks later.
  At each study visit, hair diameters are measured on several hundred
 of hairs on left and right head from a subject.  Comparison between visits is to compare the distributions of hair diameters at two visit point.
 The diameters from one subject often follows  non-traditional
distributions.  For example, a subject at baseline and   8 weeks later hair diameter frequency plot for treatment $C$ are shown in Figure
\ref{fig1} and the distributions are shown in Figure \ref{fig2}. It is of importance to know holistically how much diameters have changed.

\begin{figure}[htbp]
\centering {\subfigure{\label{fig:p1}
\includegraphics[width=0.45\linewidth]{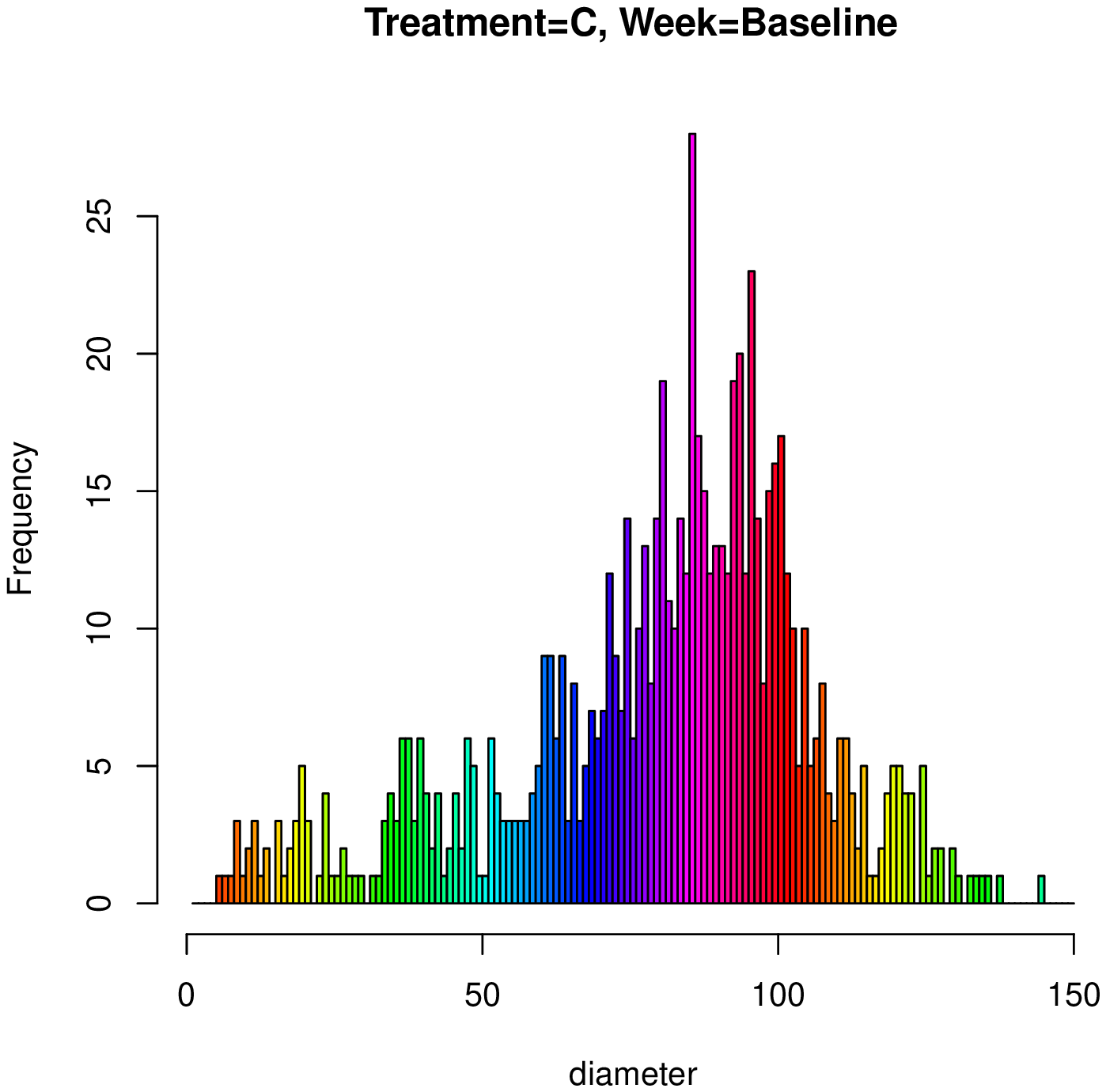}} \qquad 
\subfigure{\label{fig:p2}
\includegraphics[width=0.45\linewidth ]{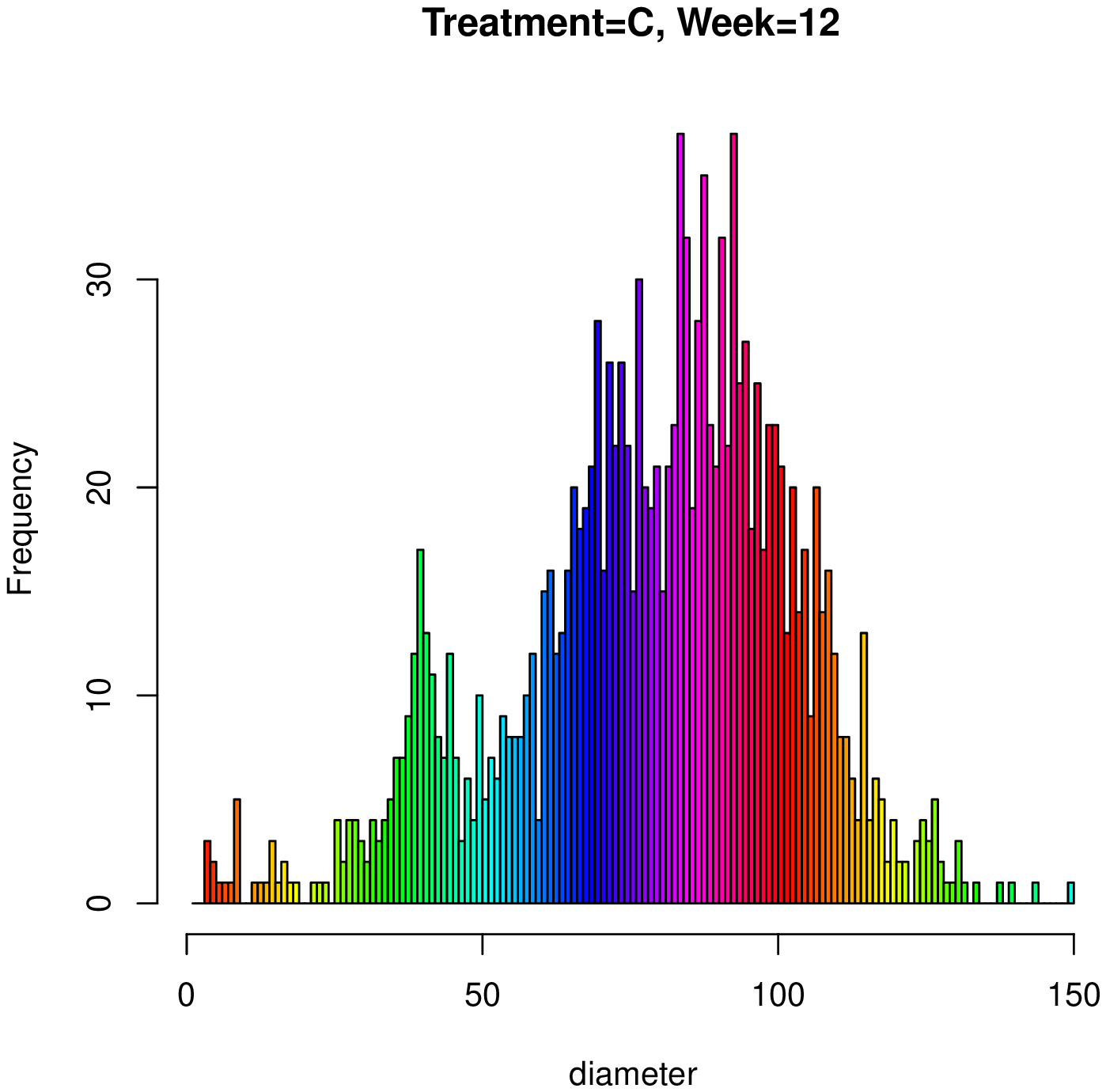}}}
 \caption{\label{fig1}Frequency plot of subject=0001}
\end{figure}

\begin{figure}[htbp]
\centering{
\includegraphics[width=3.6in]{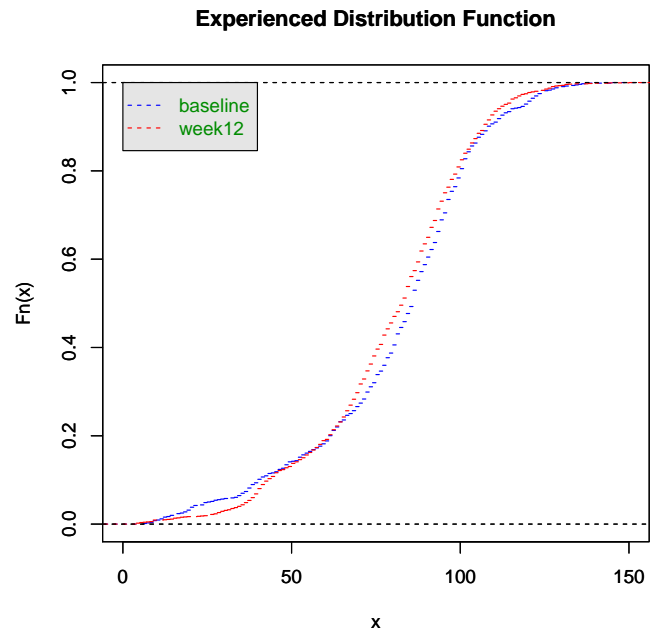}
 \caption{\label{fig2}Distribution of subject=0001}}
\end{figure}

For each subject, the optimal shift amount of the distributions of hair diameters at the two visit point for each treatment under Mallows
distance and K-S distance can be obtained. For instance, the shift plots for $C$ and $F$ of a subject are displayed in Figure \ref{fig3} and
shift plots of all subjects  for $C$ are displayed in Figure \ref{fig4} . The shift corresponding to the minimum distance is the difference
between two distributions, for comparison analysis.

\begin{figure}[htbp]
\centering {\subfigure{\label{fig:p3}
\includegraphics[width=0.45\linewidth]{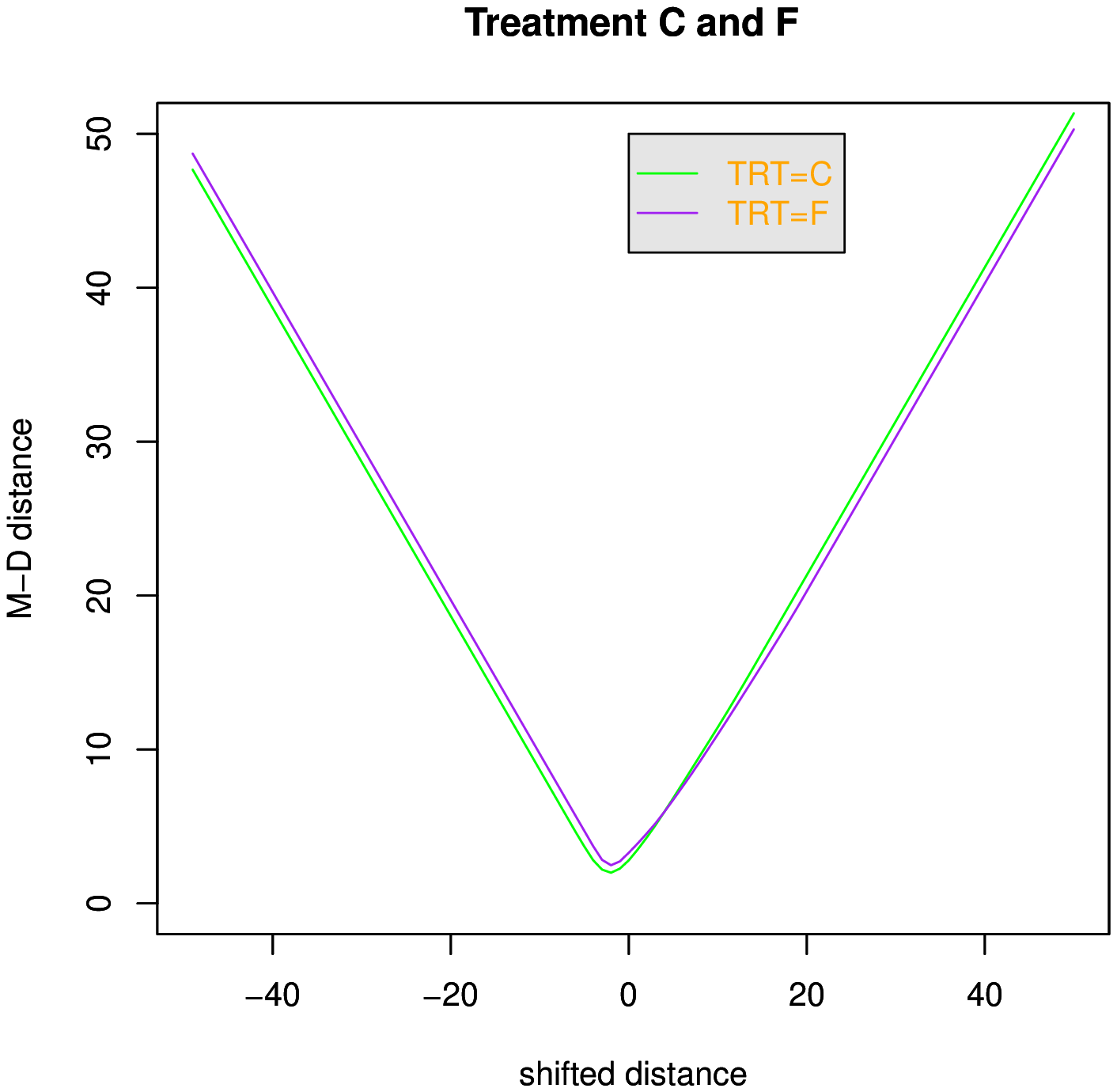}} \qquad \subfigure{\label{fig:p4}\includegraphics[width=0.45
\linewidth ]{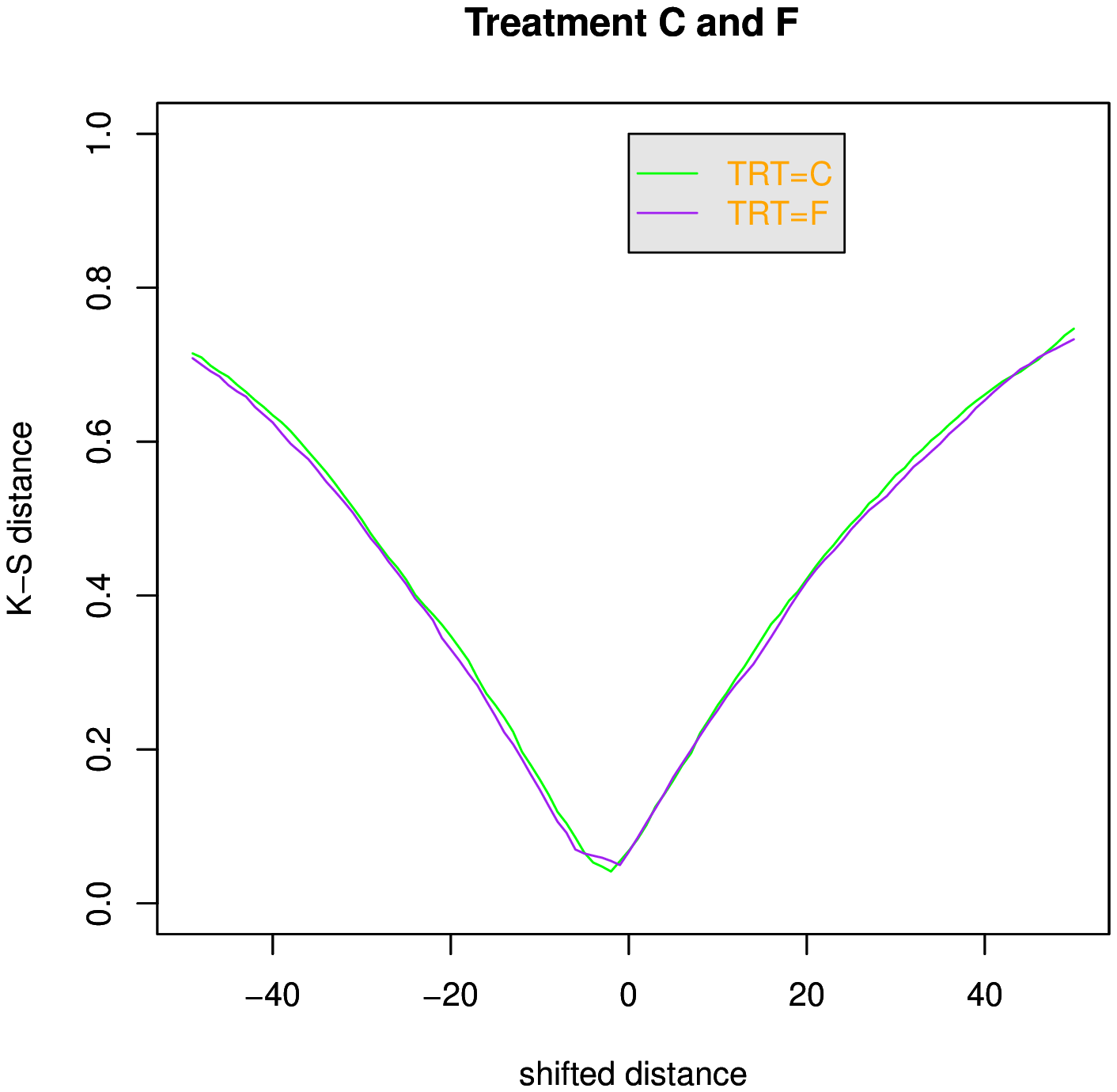}}}
 \caption{\label{fig3}Shift plot of subject=0001}
\end{figure}

\begin{figure}
\centering
\includegraphics[width=3.6in]{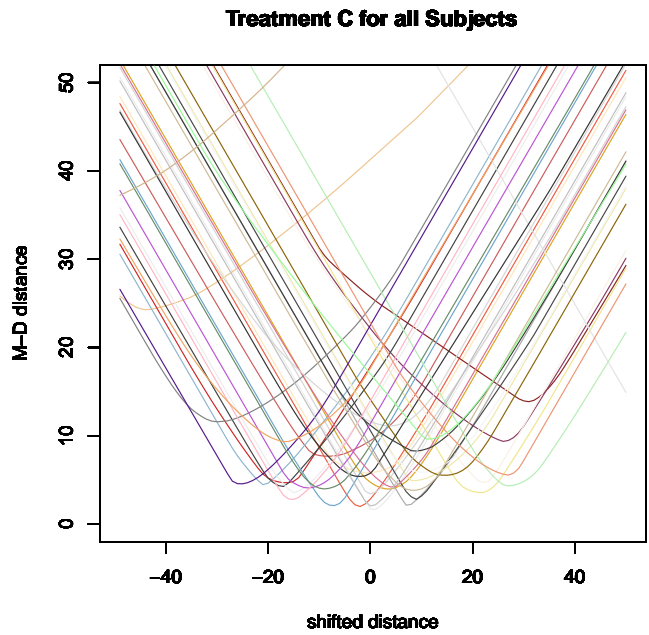}
 \caption{\label{fig4}Shift plot of all subject for $C$}
\end{figure}

Use the optimal values for each subject each treatment as responses and perform Wilcoxon signed rank test on differences in shifts between
treatments for all subjects to detect a difference between the treatments. The results are shown in Table \ref{tab-1}, from which we concludes
that the two treatments are difference at 0.01 level.

\begin{table}[htbp]
  {\caption{\label{tab-1}Comparison of treatments under Distance Shift}}%
  {%
    \begin{tabular}{c|l|l|l|l|l|l}
    \hline
    \bfseries Method & \bfseries n & \bfseries Mean.C & \bfseries Mean.F & \bfseries Mean.(C-F) & \bfseries variance.of.diff & \bfseries p-value \\
    \hline
    M-D &38 & 1.395 & -0.669 & 2.065 & 16.55 &0.0030\\\hline
    K-S & 38 & 1.289 & -0.763 & 2.053 & 19.02 & 0.0025\\
    \hline
    \end{tabular}
  }
\end{table}

\subsection{A Simulation Study}
We apply computer simulation to illustrate our approach on the shift case, scale case, and shift-scale case under Mallows distance or K-S
distance.  Let $ n=100$ and generates two group of independent data $\{X_i\}$ and  $\{Y_i\}$, where $X_i\sim N(\mu_1, \sigma_1^2)$ and $Y_i\sim
N(\mu_2, \sigma_2^2)$, $i=1,2,...,n$. Apply our methods proposed to those data, and calculate the optimal shift value, the optimal scale value
and optimal shift-scale values of the three cases. Repeat this process $M=100$ times and the means and standard errors of those calculated
values are output.

Consider the following four situations: (1) $\mu_1=\mu_2=150, \sigma_1=10, \sigma_2=15$; (2) $\mu_1=150, \mu_2=160, \sigma_1=\sigma_2=10$; (3)
$\mu_1=150, \mu_2=160, \sigma_1=10, \sigma_2=15$. The shift-scale plots for Mallows distance and K-S distance in situations (4) are displayed in
Figure \ref{fig5} and
 the results for all are show in Table \ref{tab-result}. We can see that our approach performed better under Mallows distance than K-S
distance for all situations and cases except for shift case in situation (1), thus our approach is more robust under Mallows distance than K-S
distance. Also, we could demonstrate that the convexity for Mallows distance holds while K-S distance does not, which is consistent with Theorem \ref{thr-1} and \ref{thr-2}. However, both their minimization exist and can be computed.

\begin{figure}[htbp]
\centering {\subfigure{\label{fig:p5}
\includegraphics[width=0.45\linewidth]{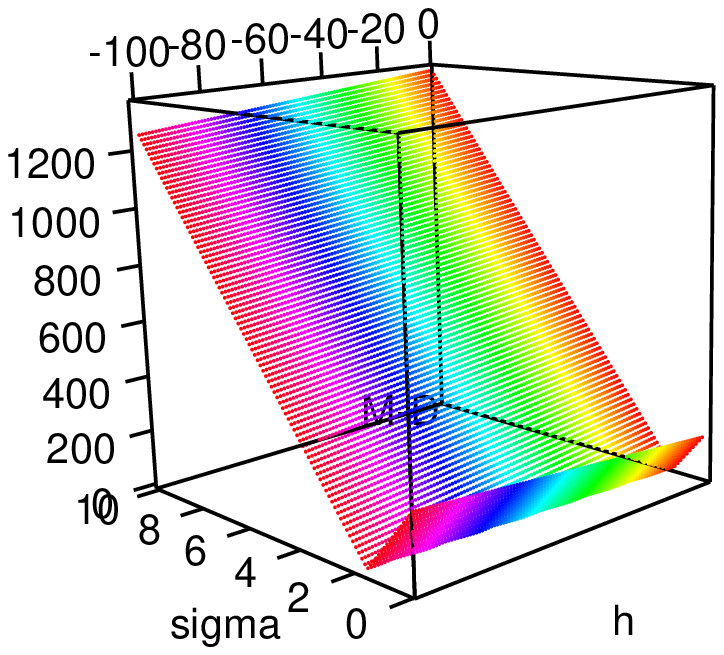}} \qquad \subfigure{\label{fig:p6}
\includegraphics[width=0.45\linewidth ]{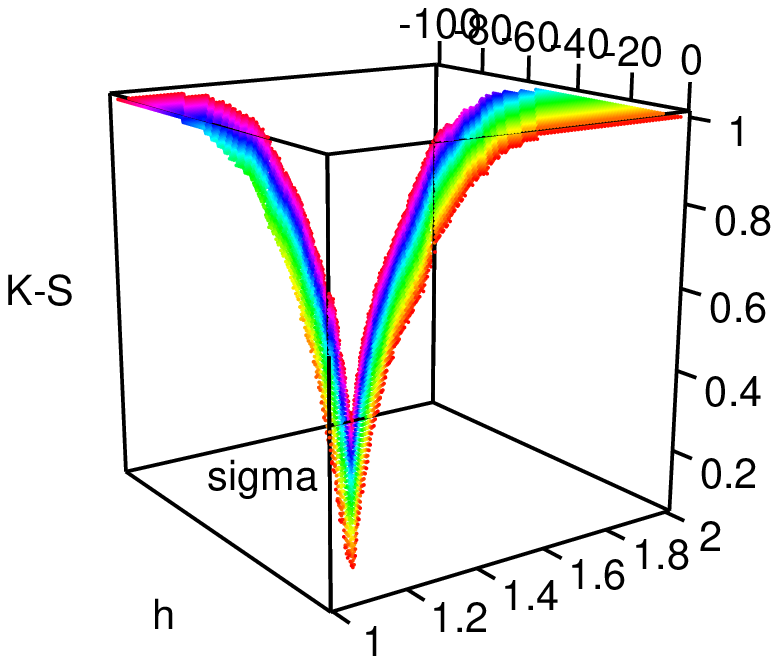}}}
 \caption{\label{fig5}shift-scale plots for Mallows and K-S distance}
\end{figure}

\begin{table}[htbp]
\centering
  {\caption{\label{tab-result}Simulation results: means and standard errors}}
\vspace{2mm}
    \begin{tabular}{c|c|c|c|c|c|c}
    \hline
   \multirow{2}{*}{Situations} & \multicolumn{2}{|c|}{(1)} & \multicolumn{2}{|c|}{(2)} & \multicolumn{2}{|c}{(3)} \\
   \cline{2-7}
         & M-D & K-S & M-D & K-S & M-D & K-S \\\hline
   shift     &  0.11   & 0.04    &  10.01   & 9.93    &  10.01   & 9.85       \\
   $h_0$     &  (2.0)  &(1.9)    &  (1.3)   & (1.4)   &  (2.0)  &  (2.0)     \\\hline
   scale     &  1.00    & 1.00    & 1.09     & 1.09    & 1.09   &1.09       \\
   $\sigma_0$&  (0.0)  & (0.0)   &  (0.02)   & (0.02)  & (0.02) & (0.02)   \\\hline
   shift-scale &       1.46,-69.03 & 1.45, -67.75 & 1.00, 10.00 & 1.01, 8.45 & 1.44,-55.9 & 1.43, -54.93\\
   $\sigma_0, h_0$ & (0.1,15.3) &(0.1,16.5) &(0.1,15.6) &(0.1,19.7) &(0.1,18.1) &(0.1,18.8)  \\\hline
   \end{tabular}
\end{table}

\section{CONCLUSION AND FUTURE WORK}

In this paper, we demonstrated a significant theorem relating to how to measure two distribution within the probabilistic interpretation under
Mallows distance or K-S distance, and a well studied simulation on real data had been implemented for an illustration of this method. The solid
theoretical foundation would be beneficial to others who would have a further understanding or research on Mallows distance measures the
discrepancy between two distributions, especially two similar distributions with inner relationship.

Besides those distances, there might be possibility to use other divergence measures  to be minimized after proper transformations. Comparison
among various underlying divergence measures will be of interest. This is an area of research that we continue to pursue provided available
resources and interests. In addition, A comparison of this approach versus other methods is another topic to be investigated.





\end{document}